\newtheorem{theorem}{Theorem}
\newtheorem{corollary}[theorem]{Corollary}
\newtheorem{proposition}{Proposition}
\newcommand{\figref}[1]{{Fig.}~\ref{#1}}
\def\bb0{{\mathbb{0}}}
\def\bb{{\mathbf{b}}}
\def\bff{{\mathbf{f}}}
\def\bh{{\mathbf{h}}}
\def\bx{{\mathbf{x}}}
\def\by{{\mathbf{y}}}
\def\b0{{\mathbf{0}}}
\def\bI{{\mathbf{I}}}
\def\bN{{\mathbf{N}}}
\def\bY{{\mathbf{Y}}}
\def\bbE{{\mathbb{E}}}
\def\sf0{{\mathsf{0}}}
\newcommand{\sref}[1]{{Section}~\ref{#1}}
\def\rm{\mathrm}
\begin{document}
\title{Deep Learning for Massive MIMO with 1-Bit ADCs: When More Antennas Need Fewer Pilots}

\author{Yu Zhang, Muhammad Alrabeiah, and Ahmed Alkhateeb\\  {Arizona State University, Email: $\{$y.zhang, malrabei, alkhateeb$\}$@asu.edu} \thanks{This work is supported by the National Science Foundation under Grant No. 1923676.}}
\maketitle

\begin{abstract}

This paper considers uplink massive MIMO systems with 1-bit analog-to-digital converters (ADCs) and develops a deep-learning based channel estimation framework. In this framework, the prior channel estimation observations and deep neural networks are leveraged to learn the non-trivial mapping from quantized received measurements to channels. For that, we derive the sufficient length and structure of the pilot sequence to guarantee the existence of this mapping function. This leads to the interesting, and \textit{counter-intuitive}, observation that when more base-station antennas are employed, our proposed deep learning approach achieves better channel estimation performance, for the same pilot sequence length. Equivalently, for the same channel estimation performance, this means that when more antennas are employed, fewer pilots are required. This observation is also analytically proved for some special channel models. Simulation results confirm our observations and show that more antennas lead to better channel estimation in terms of the normalized mean squared error and the receive signal-to-noise ratio per antenna.
\end{abstract}

\section{Introduction} \label{sec:Intro}

Using low-resolution analog-to-digital converters (ADCs) in massive MIMO systems has the potential of reducing the power consumption while maintaining good achievable rate performance. These gains attracted increasing research interest in the last few years \cite{Choi2016, Mo2018}. One main challenge in these systems, however, is estimating the channels from highly quantized measurements. While several channel estimation techniques have been proposed in the literature \cite{Choi2016, Mo2018}, these solutions generally require very large training overhead (long pilot sequences). This highly impacts the feasibility of employing low-resolution ADCs in practical millimeter wave (mmWave) and massive MIMO systems. This paper targets overcoming these challenges and enabling low-resolution ADCs in  large-scale MIMO systems.

\textbf{Contribution:} In this paper, we propose a deep-learning based framework for the channel estimation problem in massive MIMO systems with 1-bit ADCs. In this framework, the prior channel estimation observations and deep neural networks are exploited to learn the mapping from the received quantized measurements to the channels. Learning this mapping, however, requires its existence in the first place. For that, we derive the sufficient length and structure of the pilot sequence that guarantees the existence of this quantized measurement to channel mapping.  Then, we make the interesting observation that for the same set of candidate user locations, more  antennas require fewer pilots to guarantee the mapping existence. This means that increasing the number of base station antennas reduces the number of required pilots to estimate their channels, which may seem counter-intuitive. The intuition justifying this observation, however, is that with more antennas, the quantized measurement vectors become more unique for the different channels. Hence, they can be efficiently mapped to their corresponding channels with less error probability. This observation is also proved analytically for the case of single-path channels. Simulation results highlight the promising gains of the proposed deep learning approach; they confirm that more antennas lead to better channel estimates, in terms of normalized mean-squared error (NMSE) and per-antenna signal-to-noise ratio (SNR).

\textbf{Prior Work:} To the best of our knowledge, no prior work has addressed deep learning based channel estimation for massive MIMO systems with 1-bit ADCs, or revealed the interesting observation that more antennas need fewer pilots. Relevant problems, however, have attracted a lot of research interest in the last few years \cite{Choi2016, Mo2018,Jeon2018,Balevi2019,Klautau2018,Gao2019}. In \cite{Choi2016, Mo2018}, classical signal processing tools have been leveraged to design low-complexity near-maximum likelihood (ML) data detector \cite{Choi2016} and develop efficient channel estimation techniques for massive MIMO and mmWave systems \cite{Mo2018}. A common problem in these solutions is the need for very long pilot sequences to achieve good data detection or channel estimation performance.

In \cite{Jeon2018,Balevi2019,Klautau2018,Gao2019}, machine learning techniques were developed to address several problems with 1-bit ADCs. For example, \cite{Jeon2018} developed a machine learning framework for data detection, but not for channel estimation, and \cite{Balevi2019} considered the channel estimation problem in OFDM systems, but only for systems with single antennas. In \cite{Klautau2018}, deep learning solutions for data detection and channel equalization were developed for MIMO systems, but their approach works only for low-dimensional MIMO regimes as the proposed network does not converge for large numbers of antennas. In \cite{Gao2019}, systems with mixed full-resolution and low-resolution ADCs were adopted, but only the received signals from full-resolution ADCs were used to estimate the channels, which is in fact equivalent to the channel mapping concept we proposed in \cite{Alrabeiah2019}.


\begin{figure}[hbt]
	\centering
	\includegraphics[width=0.9\columnwidth]{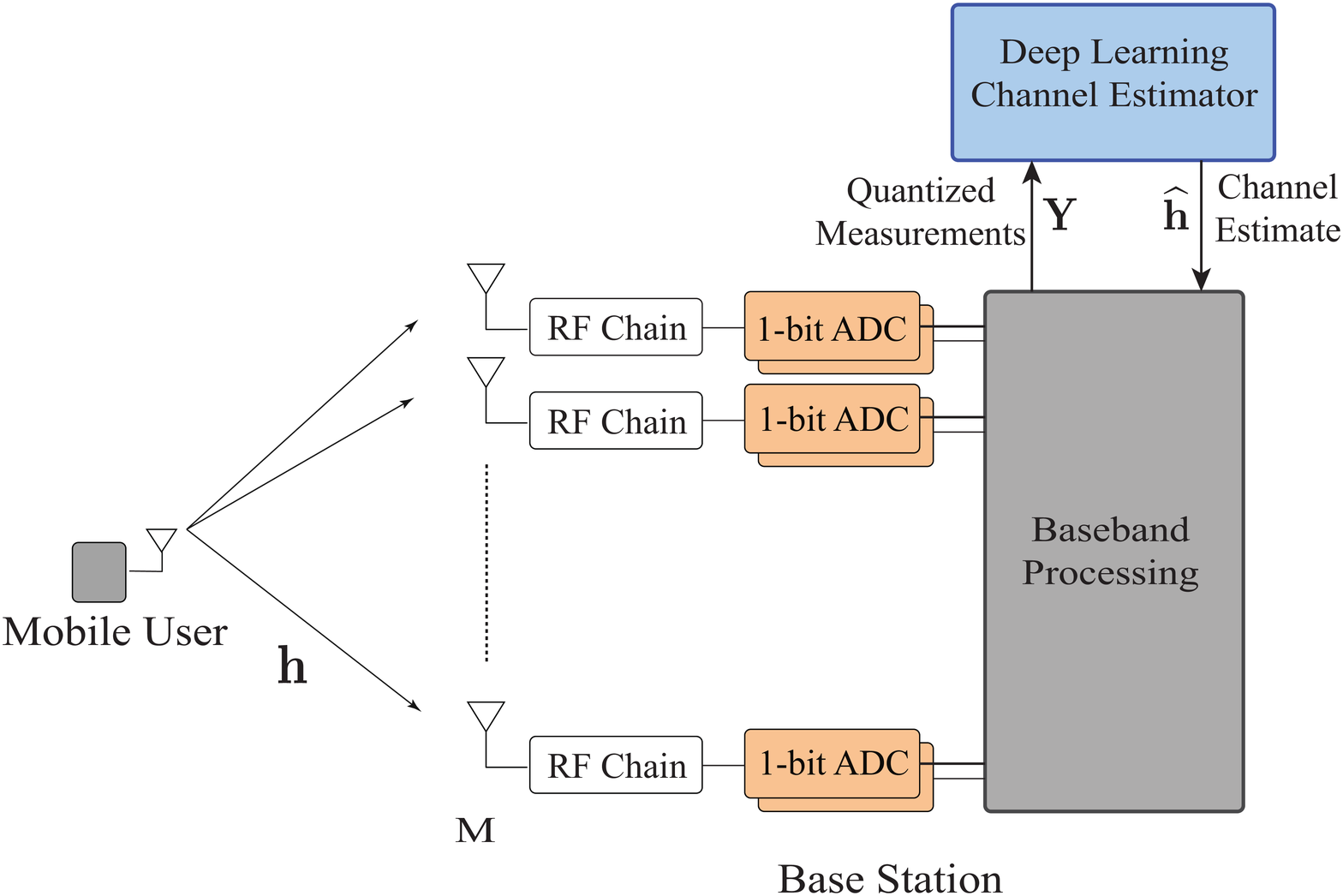}
	\caption{The adopted massive MIMO system where the base station receiver uses 1-bit ADCs. The uplink quantized received measurement matrix  $\bY$ is fed to a deep learning model that predicts the channel vector $\widehat{\bh}$.}
	\label{fig:System}
\end{figure}

\section{System and Channel Models} \label{sec:System}

We consider the system shown in \figref{fig:System} where a massive MIMO base station (BS) with $M$ antennas is communicating with a single-antenna user. The BS employs only 1-bit analog-to-digital converters (ADCs) in its receive chains. Further, we adopt a time division duplexing (TDD) system operation where the channel is estimated through an uplink training and used for downlink data transmission as summarized below.

\textbf{Uplink Training:}
If the user transmits an uplink pilot sequence $\bx\in \mathbb{C}^{N\times 1}$, where $N$ is the length of pilot sequence, then the received signal at the BS after the ADC quantization can be expressed as
\begin{equation}\label{recSig}
\bY = \rm{sgn}(\bh \bx^{T} + \bN),
\end{equation}
where $\bh \in \mathbb{C}^{M \times 1}$ is the channel vector between the mobile user and the BS antennas, $\bN$ is the receive noise matrix at the BS with independent and identically distributed (i.i.d.) elements drawn from $\mathcal{N}_\mathbb{C}(0, \sigma^2)$, and the transmitted pilot sequence is satisfying $\bbE\left[\bx \bx^H\right]=P_t \bI$ with $P_t$ denoting the average transmit power per symbol. The element-wise operator $\rm{sgn}(\cdot)$ is the signum function, and it is applied separately to the real and imaginary part of its argument. Finally,  ${\bf Y}$ is the $M \times N$ quantized receive measurement matrix that consists of the received pilot signals after quantization.


\textbf{Channel Model:} We adopt a general geometric channel model for $\bh$. Assume that the signal propagation between the user and BS consists of $L$ paths. Each path $\ell$ has a complex gain $\alpha_\ell$ and an angle of arrival $\phi_\ell$, then we define
\begin{equation}\label{channel}
{\bf h} = \sum\limits_{\ell=1}^L\alpha_\ell{\bf a}(\phi_\ell),
\end{equation}
where ${\bf a}(\phi_\ell)$ is the array response vector of the BS.

\textbf{Channel Estimation:} The quantized receive measurement matrix $\bY$ will be processed using a channel estimator to construct an estimated channel vector $\widehat{\bh}$. This paper focuses on leveraging deep learning models for this channel estimation task as will be discussed shortly in Sections \ref{sec:Problem}, \ref{sec:DL_solution}.

\textbf{Downlink data transmission:} Based on the estimated channel vector, the downlink beamforming $\bff$ is constructed as a conjugate beamforming, i.e., $\bff=\textstyle{{\widehat{\bh}}^{^{^*}}}/\|{\widehat{\bh}}\|$.  With this design, the downlink receive SNR per \textit{transmit} antenna can be written as
\begin{equation} \label{eq:per_ant}
\mathsf{SNR}_\text{ant}=\frac{\rho}{M} \frac{\left|\widehat{\bh}^H \bh\right|^2}{\|\widehat{\bh}\|^2},
\end{equation}
where $\rho$ denotes the average receive SNR before beamforming.

\section{Problem Definition} \label{sec:Problem}

In this paper, we investigate the design of efficient channel estimation strategy that can construct the channel $\widehat{\bh}$ from the highly quantized received signal $\bY$. More specifically, assuming that the pilot sequence $\bx$ is known to the BS (the receiver), our objective is to develop a channel estimation strategy that minimizes the normalized mean-squared error (NMSE) between the estimated channel  and the original channel vectors, defined as
\begin{equation}\label{NMSE}
  \mathsf{NMSE} = \mathbb{E}\left[\frac{\|{\bf h}-{\widehat{{\bf h}}}\|^2}{\|{\bf h}\|^2}\right].
\end{equation}

Traditionally,  channel estimation algorithms attempt to process the  quantized signal ${\bY}$ to estimate the channel ${\bh}$. Since ${\bY}$ is highly quantized, though, very long pilot sequences normally need to be used to achieve a reasonable channel estimation quality. To overcome this challenge, we propose to exploit deep learning models to learn how to efficiently estimate the channels from the quantized measurements while requiring only small pilot sequences.

\section{Deep Learning Based Channel Estimation} \label{sec:DL_solution}

Classical channel estimation techniques for massive MIMO systems with low resolution ADCs, such as \cite{Choi2016, Mo2018}, attempt to estimate the channel only from the quantized received signal, without using prior observations. The channels, however, are intuitively some functions of the different elements of the environment, such as the environment geometry, materials, transmitter/receiver positions, etc. \cite{Alkhateeb2018}. This means that the BSs deployed in a certain environment will likely experience similar channels more than once. Hence, prior experience could be leveraged to learn the underlying relation between the quantized received signals and the channels. This has the potential of significantly reducing the pilot length. With that in mind, we propose to utilize deep learning to learn the mapping from the quantized received measurement matrix ${\bf Y}$ to the channel ${\bf h}$. Next, we first establish the conditions under which this mapping exists then highlight an interesting observation about how scaling the number of antennas up scales the number of required pilots down.


\subsection{Mapping Quantized Measurements to Channels} \label{subsec:map}

Now, we investigate the existence of the mapping from quantized measurements to channels and highlight the motivation to leverage deep learning models. We also briefly explain why our proposed approach has the potential of reducing the number of pilots. First, consider a setup (indoor or outdoor environment) where a massive MIMO BS is serving a single-antenna user, as described in \sref{sec:System}. Let $\{\bh\}$ denote the set of candidate channels for the user, which depends on the candidate user positions as well as the surrounding environment. Further, let $\{\bY\}$  represent the corresponding quantized measurement matrices for the channel set $\{\bh\}$ and a pilot sequence $\bx$. We can then define the mapping from quantized measurements to channels, $\boldsymbol{\Phi(.)}$, as
\begin{equation}
\boldsymbol{\Phi}: \{\bY\} \rightarrow \{\bh\}.
\end{equation}

\noindent Note that if this mapping exists and is known, it can be used to predict the channel vector $\bh$ from the quantized receive matrix $\bY$. Next, we establish existence of this mapping in Proposition\ref{prop1} and then discuss how we can learn it.

\begin{proposition} \label{prop1}
	Consider the system and channel model in \sref{sec:System} with $\bN=0$ and a set of candidate channels $\{\bh\}$. Define the angle $\alpha$ as	
	\begin{equation}\label{minAngle}
	{\alpha} = \min\limits_{\substack{\forall \bh_u, \bh_v \in \{\bh\}\\ u \neq v}} \max\limits_{\forall m}\left|\angle \left[\bh_u\right]_{m} - \angle \left[\bh_v\right]_{m}\right|.
	\end{equation}
	If the pilot sequence $\bx$ is constructed to have a length $N$ satisfying  $N \geq \left\lceil\frac{\pi}{2 \alpha} \right\rceil$, with the angles of the pilot complex symbols uniformly sampling the range $]0, \frac{\pi}{2}]$, then the mapping function $\boldsymbol{\Phi}(.)$ exists.
\end{proposition}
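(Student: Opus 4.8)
The plan is to recast the existence of $\boldsymbol{\Phi}$ as the injectivity of the forward map $\bh\mapsto\bY$ on the candidate set $\{\bh\}$ and to prove that injectivity directly. First I would unpack the noiseless model \eqref{recSig}: writing $[\bh]_m=r_m e^{\j\theta_m}$ and $[\bx]_n=\rho_n e^{\j\psi_n}$ with $r_m,\rho_n>0$, the $(m,n)$ entry of $\bY$ is $\rm{sgn}(\cos(\theta_m+\psi_n))+\j\,\rm{sgn}(\sin(\theta_m+\psi_n))$, which is independent of the magnitudes and records only which quadrant the angle $\theta_m+\psi_n$ occupies. This makes clear why \eqref{minAngle} is the governing quantity and, in particular, that $\alpha>0$ for any set of genuinely distinct candidates: two channels sharing the same phase at every antenna would be indistinguishable after quantization, so $\alpha>0$ is exactly the condition that the phase profile identifies the channel within $\{\bh\}$.

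Next I would reduce the matrix statement to a one-dimensional one. Fixing a pair $\bh_u\neq\bh_v$, with corresponding measurements $\bY_u,\bY_v$ and phases $\theta_m^{(u)}=\angle[\bh_u]_m$, definition \eqref{minAngle} guarantees an antenna $m^\ast$ with $\delta:=|\theta_{m^\ast}^{(u)}-\theta_{m^\ast}^{(v)}|\geq\alpha$. It then suffices to produce a single pilot index $n$ for which $\theta_{m^\ast}^{(u)}+\psi_n$ and $\theta_{m^\ast}^{(v)}+\psi_n$ lie in different quadrants, since that already forces $[\bY_u]_{m^\ast,n}\neq[\bY_v]_{m^\ast,n}$ and hence $\bY_u\neq\bY_v$.

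The heart of the argument is this one-dimensional claim. Taking $a<b$ for the two phases with $b-a=\delta$, the two shifted angles fail to lie in a common open quadrant precisely when the closed interval $[a+\psi_n,\,b+\psi_n]$ contains a multiple of $\pi/2$, i.e. when $\psi_n$ lies in an interval of the form $[K\pi/2-b,\,K\pi/2-a]$ of length $\delta$. Because the pilot angles uniformly sample $(0,\pi/2]$ with spacing $s=\pi/(2N)$ and the hypothesis $N\geq\lceil\pi/(2\alpha)\rceil$ yields $s\leq\alpha\leq\delta$, this target interval is at least as long as the pilot spacing and must therefore contain a sampled pilot, provided it sits inside the sampled range. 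When the straddling interval for the nearest boundary dips below $0$, I would instead invoke the boundary a quarter-turn higher and exploit that the right endpoint $\psi_N=\pi/2$ is itself a pilot; a short case check then shows $\psi_N$ straddles that boundary. The complementary regime, where the two phases are at least a quarter-turn apart on the unit circle, is immediate, since they can never share an open quadrant and any pilot separates them.

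Assembling these steps, injectivity holds for every pair of distinct candidates, so $\bh\mapsto\bY$ is one-to-one and $\boldsymbol{\Phi}$ is well defined on $\{\bY\}$. I expect the main obstacle to be the discretization bookkeeping in the third step: guaranteeing that the length-$\delta$ straddling interval truly captures one of the finitely many sampled pilots rather than merely some real number in $(0,\pi/2]$, together with the treatment of the degenerate $\rm{sgn}(\cdot)=0$ cases in which a shifted angle lands exactly on a quadrant boundary. The facts that the sampling range is exactly $(0,\pi/2]$ and that its right endpoint is kept as a pilot are what close these edge cases, so those hypotheses must be used carefully rather than treated as cosmetic.
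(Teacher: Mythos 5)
Your proposal is correct and follows essentially the same route as the paper's proof: reduce the existence of $\boldsymbol{\Phi}$ to injectivity of the forward map $\bh \mapsto \bY$ on $\{\bh\}$, isolate the antenna achieving the maximal phase difference for a given pair, and show that rotation by some pilot angle places the two channel entries in different quadrants of the complex plane. The only difference is that you carry out the discretization bookkeeping explicitly (the length-$\delta$ straddling interval versus the $\pi/(2N)$ pilot spacing, the $\alpha>0$ requirement, and the boundary/$\mathrm{sgn}(0)$ edge cases), whereas the paper's appendix simply asserts that $N \geq \left\lceil \pi/(2\overline{\alpha})\right\rceil$ evenly sampled pilot angles in $]0,\pi/2]$ suffice.
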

\begin{proof}
	See Appendix \ref{app1}.
\end{proof}
Proposition \ref{prop1} means that once the pilot sequence is designed with the specific structure described in the proposition, then there exists a one-to-one mapping $\boldsymbol{\Phi}(.)$ that can map the quantized measurement matrix $\bY$ to the channel $\bh$, i.e., it can use $\bY$ to predict $\bh$. Interestingly, as we will see in \sref{sec:Results}, only a few pilot symbols (very small N) are needed in massive MIMO systems to make this mapping $\boldsymbol{\Phi}(.)$ exist with high probability. This has the potential of significantly reducing the channel training overhead compared to classical 1-bit ADC channel estimation techniques. To be able to leverage this mapping function, however, we need to know it. Characterizing this mapping analytically is very non-trivial mainly because of the non-linear quantization. \textbf{With this motivation, we propose to exploit the powerful learning capabilities of deep neural networks to learn this mapping and harvest the promising gains in reducing the channel training overhead}. Before describing the adopted deep learning model in \sref{sec:DL_model}, we first highlight in the next subsection an interesting gain of using the proposed deep learning based 1-bit ADC channel estimation approach in massive MIMO systems.

\subsection{Discussion: More Antennas Need Fewer Pilots} \label{subsec:more}
As shown in Proposition \ref{prop1} and its proof, the desired pilot sequence should have a length that guarantees that every two different channels in $\{\bh\}$ result in two unique quantized measurement matrices. Intuitively for the same uplink pilot sequence length, the more antennas deployed at the BS the more likely they lead to unique measurement matrices. Interestingly, this means that more antennas will lead to better channel estimation as will be shown in \sref{sec:Results}. This also implies that when more antennas are employed at the BS, fewer pilots are required to guarantee that the mapping from $\{\bh\}$ to $\{\bY\}$ is bijective (one-to-one). This interesting relation can be analytically characterized for several channel models. In the next corollary, we consider the LOS channel model, and prove that more antennas need fewer pilots.
\begin{corollary} \label{cor1}
	Consider a BS employing a ULA with half-wavelength antenna spacing and a single-path channel model ($L=1$). Let $\delta \phi$ denote the smallest difference between any two angles of arrival, $\phi_1, \phi_2 \in [0, \pi[$, of any two users. If the pilot sequence is constructed according to Proposition \ref{prop1}, then the required pilot sequence length to guarantee that the mapping $\boldsymbol{\Phi}(.)$ exists is
	\begin{equation}
	N=\left\lceil{\frac{1}{(M-1)(4 \sin^2 (\delta \phi/2) )}}\right\rceil.
	\end{equation}
\end{corollary}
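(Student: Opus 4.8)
The plan is to specialize the general definition of the angle $\alpha$ from Proposition~\ref{prop1} to the single-path ULA model and then invoke the sufficient length $N \ge \lceil \pi/(2\alpha)\rceil$ directly. First I would write the channel as $\bh = \alpha_1 \ba(\phi_1)$ and use the half-wavelength ULA array response $[\ba(\phi)]_m = e^{\j \pi (m-1)\cos\phi}$ for $m = 1,\dots,M$, so that the phase of the $m$-th entry is $\angle[\bh]_m = \angle \alpha_1 + \pi(m-1)\cos\phi$. Taking the per-path gain phase as common to the channels being compared (so that only the array-induced phase progression distinguishes two arrival angles), the per-antenna phase gap between two candidate channels with angles $\phi_u,\phi_v$ becomes $|\angle[\bh_u]_m - \angle[\bh_v]_m| = \pi(m-1)\,|\cos\phi_u - \cos\phi_v|$.

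Next I would carry out the two nested optimizations in the definition of $\alpha$. The inner maximization over $m \in \{1,\dots,M\}$ is immediate: the factor $(m-1)$ is largest at $m = M$, giving $\max_m |\angle[\bh_u]_m - \angle[\bh_v]_m| = \pi(M-1)\,|\cos\phi_u - \cos\phi_v|$. For the outer minimization over distinct AoA pairs I would use the identity $|\cos\phi_u - \cos\phi_v| = 2\,\bigl|\sin\tfrac{\phi_u+\phi_v}{2}\bigr|\,\bigl|\sin\tfrac{\phi_u-\phi_v}{2}\bigr|$. For a fixed angular separation $|\phi_u - \phi_v| = \delta\phi$ the second factor is pinned at $\sin(\delta\phi/2)$, while the first factor is minimized when the pair sits at the edge of the range $[0,\pi[$; taking $\phi_u = 0,\ \phi_v = \delta\phi$ yields the worst case $|\cos\phi_u - \cos\phi_v| = 1 - \cos\delta\phi = 2\sin^2(\delta\phi/2)$. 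Hence $\alpha = 2\pi(M-1)\sin^2(\delta\phi/2)$.

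Finally, substituting this $\alpha$ into the threshold of Proposition~\ref{prop1} gives $N \ge \lceil \pi/(2\alpha)\rceil = \lceil 1/\bigl(4(M-1)\sin^2(\delta\phi/2)\bigr)\rceil$, which is exactly the claimed expression; the explicit $(M-1)$ in the denominator also makes transparent that $N$ decreases as $M$ grows, i.e., more antennas need fewer pilots.

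I anticipate the main obstacle to be the outer minimization above — specifically, rigorously establishing that the worst-case AoA pair sits at the edge of $[0,\pi[$ rather than near broadside, which requires tracking the center-angle dependence of $|\cos\phi_u - \cos\phi_v|$ instead of treating $\cos(\cdot)$ as locally linear. A secondary subtlety is justifying that the constant gain-phase offset $\angle\alpha_{1,u} - \angle\alpha_{1,v}$ appearing at $m=1$ does not pull the maximum over $m$ below the array-progression value; this is clean if the gains are assumed cophased, or alternatively one argues that a nonzero offset can only enlarge $\max_m$ and hence only relax the pilot-length requirement.
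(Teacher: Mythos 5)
Your derivation is correct and follows exactly the route the paper intends (the paper omits its own proof of this corollary, stating only that it follows from Proposition~\ref{prop1}): specialize $\alpha$ in \eqref{minAngle} to the half-wavelength ULA single-path model, observe that the inner maximum over $m$ is attained at $m=M$, reduce the outer minimum to the edge pair at separation $\delta\phi$ via $|\cos\phi_u-\cos\phi_v| = 2|\sin\tfrac{\phi_u+\phi_v}{2}||\sin\tfrac{\phi_u-\phi_v}{2}|$, and substitute $\alpha = 2\pi(M-1)\sin^2(\delta\phi/2)$ into $N \geq \lceil \pi/(2\alpha)\rceil$. One caveat on your closing remark: it is \emph{not} true that a nonzero gain-phase offset $c = \angle\alpha_{1,u}-\angle\alpha_{1,v}$ can only enlarge $\max_m$; the per-antenna phase differences form an arithmetic progression from $c$ to $c+\pi(M-1)(\cos\phi_u-\cos\phi_v)$, whose maximum absolute value is $\max\left(|c|,\,|c+\pi(M-1)(\cos\phi_u-\cos\phi_v)|\right)$, and taking $c$ equal to minus half the total phase progression shrinks this maximum by a factor of two relative to the cophased case. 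The formula in the corollary therefore implicitly assumes cophased path gains (consistent with its dependence on $M$ and $\delta\phi$ only), which is the reading your primary argument correctly adopts; the same remark applies to the phase-wrapping issue you inherit from Proposition~\ref{prop1} itself, where $\angle[\bh_u]_m-\angle[\bh_v]_m$ is treated as an unwrapped quantity.
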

\begin{proof}
	The proof follows from Proposition \ref{prop1} and is omitted because of the limited space.
\end{proof}
Corollary \ref{cor1} captures clearly the interesting gain in our proposed deep learning approach, where more antennas require fewer pilots to ensure the existence of $\boldsymbol{\Phi}(.)$, i.e., to have the same channel estimation quality. This interesting finding will also be validated using numerical simulations in section \sref{sec:Results}.

\subsection{Proposed Deep Learning Model} \label{sec:DL_model}

To map back quantized received signal to complex-valued channels, we chose to utilize the expressive ability of deep learning \cite{DLBook}, more specifically fully-connected neural networks. These networks are known to be good function approximators \cite{NNUnivApprox}, and therefore, we design and train a dense neural network to learn the mapping from quantized measurements to channels.

\textbf{Network architecture:} The designed network has three dense stacks of layers. The first two are very wide and comprise a sequence of fully-connected layer, non-linearity layer, and dropout layer. The number of neurons in each fully-connected layer is $L_\text{NN}$, and they are followed by Rectified Linear Unit (ReLU) non-linearities. The last stack, which is the output, has only a fully-connected layer with $2M$-neurons.

\textbf{Network training}: As our target is estimating users' channels, we pose our learning problem as a regression problem in a supervised leaning setting; the network is trained to minimize a loss function measuring the accuracy of the predictions using a distance measure to some \textit{desired} outputs. Given the channel estimation problem formulation in \sref{sec:Problem}, we choose NMSE as the loss function. Training is implemented using ADAM optimizer, and, therefore, we choose to minimize the average NMSE over a training mini-batch of size $B$ users.

\textbf{Data pre-processing}: Before any training takes place, the inputs and outputs of the network must be pre-processed for efficient training, see \cite{EffBackProp}. The first stage of pre-processing normalizes those channels, whether in the training or testing datasets, to the range $[-1,1]$ using the maximum absolute channel value obtained from the training set. We have found such normalization to be very useful in prior work \cite{Alrabeiah2019,Alkhateeb2018}. The second stage of the pre-processing is vectorizing the received quantized measurement matrices to have dimensions of $M N \times 1$. Finally, since popular deep learning software frameworks mainly support real-valued computations, channel and measurement vectors are decomposed into real and imaginary components and flattened into a ($2 M\times 1$) vectors for the channels and  $2 M N$-dimensional vectors for the measurement.


\section{Simulation Results} \label{sec:Results}

In this section, we evaluate the performance of our proposed deep learning based channel estimation solution for massive MIMO systems with 1-bit ADCs. We first describe the adopted scenario, dataset and the deep learning parameters used in our simulations and then discuss the interesting results.

\begin{figure}[t]
	\centering
	\includegraphics[width=.99\columnwidth]{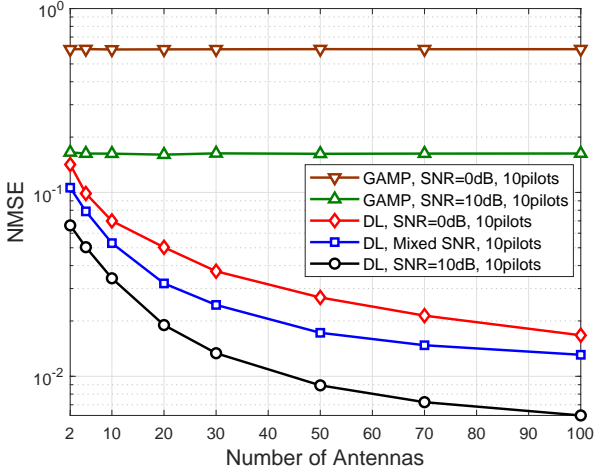}
	\caption{The NMSE of the predicted channel using the proposed deep learning based approach and EM-GM-GAMP versus the number of BS antennas, for different pilot sizes and SNRs.}
	\label{fig:SimuPilot}
\end{figure}

\subsection{Scenario and Dataset} \label{subsec:dataset}

In our simulations, we consider the indoor  massive MIMO scenario `I1$\_$2p4' which is offered by the DeepMIMO dataset \cite{DeepMIMO} and is generated based on the accurate 3D ray-tracing simulator Wireless InSite \cite{Remcom}. This scenario depicts a 10m$\times$10m room with two tables and users distributed across two x-y grids.


Given this ray-tracing scenario, we generate the DeepMIMO dataset which contains the channels between every candidate user location and every antenna at the BS. We adopt the following DeepMIMO parameters: (1) Scenario name: I1$\_$2p4, (2) Active BSs:  32, (3) Active users: Row 1 to 502, (4) Number of BS antennas in (x, y, z): (1, 100, 1), (5) System bandwidth: 0.01 GHz, (6) Number of OFDM sub-carriers: 1 (single-carrier), (7) Number of multipaths: 10 in \figref{fig:SimuPilot} and 1 in \figref{fig:SimuSNRperAnt}. To form training and testing datasets, we first shuffle the elements of the generated DeepMIMO dataset, and split it into 70\% training set and 30\% testing set. These datasets are then used to train the deep learning model and evaluate the performance of the proposed solution.


\subsection{Model Training and Testing} \label{subsec:Model}

The fully-connected network adopted in our simulation has two hidden layers, with 8192 neurons each\footnote{The code files that implement our deep learning based channel estimation solution are available in \cite{code_1bit}.}. The dimension of input and output layers depends on the number of antennas at BS and the number of pilots used for estimation. For example, if there are 100 antennas and 10 pilot symbols, then the input and output sizes will be 2000 and 200 respectively. We organize our training samples by $({\by}, {\bh})$, with ${\bf y}$ stands for the input and ${\bf h}$ for the target channel. Each sample corresponds to one user and all those samples are randomly drawn from the two grids.

The network is trained with 105981 samples for 100 epochs, and the performance of the trained network is evaluated with 45421 unseen samples. We use NMSE as our performance metric. \textbf{To guarantee a fair comparison, the structure of the network and training parameters are kept unchanged in all our simulations}, except for the input and output dimensions. 

\begin{figure}[t]
	\centering
	\includegraphics[width=.99\columnwidth]{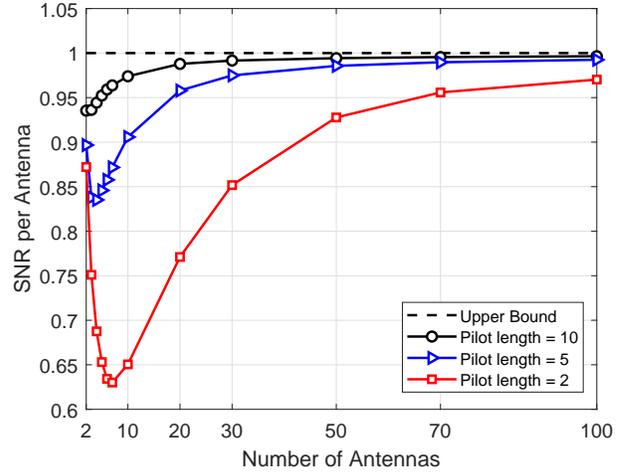}
	\caption{Achievable SNR per antenna for the proposed deep learning based channel estimation approach and the upper bound for different BS antenna numbers and pilot sizes.}
	\label{fig:SimuSNRperAnt}
\end{figure}

\subsection{Performance Evaluation}

Now, we evaluate the performance of our proposed deep learning based channel estimation approach, adopting the described uplink massive MIMO scenario in Sections \ref{subsec:dataset} and \ref{subsec:Model}.
In \figref{fig:SimuPilot}, we plot the NMSE versus the number of antennas ($M$) for different SNRs (0dB, 10dB, and mixed SNRs in the range of 0dB-10dB). This is implemented by adding noise samples to the measurement matrices used to train and test the deep neural network.
\figref{fig:SimuPilot} first shows that the proposed deep learning approach requires only a few pilots to have very accurate channel predictions at different SNRs. This is in contrast with the very long sequences normally required by classical (non-machine learning) channel estimation approaches \cite{Choi2016, Mo2018}, such as expectation maximization Gaussian-mixture generalized approximate message passing (EM-GM-GAMP) \cite{Mo2018}. As shown in \figref{fig:SimuPilot}, with the very short pilot sequences, the proposed deep learning approach has a clear gain over the EM-GM-GAMP solution..

Furthermore, \textbf{\figref{fig:SimuPilot} illustrates that the NMSE performance of the proposed solution improves significantly when more antennas are deployed at the BS, which confirms our results in \sref{subsec:more}.}  
This interesting result is consistent with Proposition I. The minimum $\alpha$ (in rad) defined by (6) of the adopted dataset is $3.07\times10^{-5}$ for the system with 2 antennas and $0.2476$ for the 100 antenna system. Accordingly, based on Proposition I, the minimum pilot lengths guaranteeing the uniqueness of all the channels (in the form of quantized measurements) are $51166$ and $7$, respectively. This clearly shows the potential of our deep learning based approach that requires very small pilot sequences for massive MIMO systems. Further, even though Proposition 1 suggests that the number of pilots required for full bijectiveness (i.e., to be able to distinguish between any two channels) is very large for the case of 2 antennas, we can show that the \textit{percentage} of these channels that need long pilots to be distinguishable is negligible.  For example, with only 5 pilots,  $98\%$ of the \textit{complex} channels in the dataset are distinguishable. This percentage increases to $99.5\%$ with 10 pilots. All that clearly explains the good performance achieved by the proposed solution with only few pilots.

In \figref{fig:SimuSNRperAnt}, we repeat the same experiment of \figref{fig:SimuPilot} and compare the SNR \textit{per antenna} defined in \eqref{eq:per_ant}, for different pilot length and antenna numbers. In the figure, the SNR of the received measurement matrices is fixed at $0$dB. \figref{fig:SimuSNRperAnt} shows that for a fixed pilot length, despite a dip in the region of small number of antennas, the  achievable per-antenna SNR approaches the upper bound as the antenna number increases. This upper bound is achieved by designing the conjugate beamformer based on the exact channel knowledge. Interestingly, this is the case even when only 2 pilots are used, i.e., with $N=2$. The reason that we get the dip (in case of $N=2, 5$) is mainly because the improvement on the total SNR does not match the rate at which antenna number increases. However, we notice that when increasing the pilot length or antenna numbers, this dip gradually vanishes. This can be explained by the insights in Sections \ref{subsec:map} and \ref{subsec:more}, which conclude that the mapping from the quantized measurements to channels becomes more bijective as more pilots are sent or more antennas are employed.

\section{Conclusion}
In this paper, we developed a deep learning based channel estimation framework for massive MIMO systems with 1-bit ADCs. We derived the structure and length of the pilot sequence that guarantee the existence of the mapping from quantized measurements to channels. We then showed that this existence requires fewer pilots for larger antenna numbers. This was confirmed using both analytical and simulation results which showed that only few pilots are required to efficiently estimate massive MIMO channels. Further, the results showed that the achievable SNR per antenna approach the upper bound with increasing the number of antennas, which highlights the promising gains for massive MIMO systems. For  future work, it is interesting to extend the proposed approach for broadband systems with frequency-selective channels. 
\begin{appendices}
	\section{} \label{app1}

\textit{Proof of Proposition} \ref{prop1}: For the mapping $\boldsymbol{\Phi}: \{\bY\} \rightarrow \{\bh\}$ to exist, the (inverse) mapping $\boldsymbol{\Psi}:  \{\bh\} \rightarrow \{\bY\}$ has to be bijective. To guarantee the bijectiveness of the mapping $\boldsymbol{\Psi}$, any two channels in $\{\bh\}$ have to lead to two unique received quantized measurements in $\{\bY\}$. Consider any two channels $\bh_u, \bh_v \in \{\bh\}$, and a pilot sequence $\bx$, the following condition needs then to be satisfied
\begin{equation}\label{unique_map}
\rm{sgn}\left( {\bh_u}{\bx}^T  \right) \neq \rm{sgn}\left(  {\bh_v} \bx^T  \right).
\end{equation}
This can be achieved if we guarantee that there exists at least one pair of corresponding elements in \eqref{unique_map} that satisfies
\begin{equation}\label{eq:notequal}
\rm{sgn}\left(\left[\bh_u\right]_m \left[\bx\right]_n\right) \neq \rm{sgn}\left(\left[\bh_v\right]_m \left[\bx\right]_n\right),
\end{equation}
for any $m \in {1, ..., M}, n \in {1, ..., N}$.  Since the elements of the channel and pilot vectors are complex values, we can view these elements as vectors in the complex plane. If we write $\left[ \bx \right]_n$ as $\left|x_n \right| e^{j \theta_n}$, then multiplying the channel element $\left[\bh_u\right]_m$ by $\left[\bx\right]_n$ simply rotates $\left[\bh_u\right]_m$ by $\theta_n$ and scales it by $|x_n|$. In this sense, achieving \eqref{eq:notequal} is by ensuring that  $\left[\bh_u\right]_m $ and $\left[\bh_v\right]_m$ are rotated by $\left[\bx\right]_n$ to lie in two different  quadrants of the complex plane. This can happen if we have  a pilot sequence of length $N \geq \left\lceil \frac{\pi}{2 \overline{\alpha}} \right\rceil$, where
\begin{equation}\label{angle}
\overline{\alpha} = \max\limits_{\forall m}\left|\angle \left[\bh_u\right]_m - \angle\left[\bh_v\right]_m\right|,
\end{equation}
and draw these $N$ pilots such that their angles evenly sample the range $]0, \frac{\pi}{2}]$. Finally, to ensure that any two channels in $\{\bh\}$ satisfy the condition in \eqref{unique_map}, we find the minimum $\overline{\alpha}$ defined in \eqref{angle} among all channel pairs in $\{{\bf h}\}$, that is
\begin{equation}\label{minAngle2}
{\alpha} = \min\limits_{\substack{\forall \bh_u, \bh_v \in \{\bh\}\\ u \neq v}} \max\limits_{\forall m}\left|\angle \left[\bh_u\right]_{m} - \angle \left[\bh_v\right]_{m}\right|.
\end{equation}
This leads to $N \geq \left\lceil \frac{\pi}{{2 \alpha}} \right\rceil$, which concludes the proof.

\end{appendices}


\end{document}